\newcommand{\Oh}[1]
    {\ensuremath{\mathcal{O} \hspace{-.5ex} \left( {#1} \right)}}
\begin{document}

\title{Minimax Trees in Linear Time}
\author{Pawe\l\ Gawrychowski\inst{1}
    \and Travis Gagie\inst{2}\fnmsep\thanks
    {This paper was written while the second author was at the University of Eastern Piedmont, Italy, supported by Italy-Israel FIRB Project ``Pattern Discovery Algorithms in Discrete Structures, with Applications to Bioinformatics''.}}
\authorrunning{P. Gawrychowski and T. Gagie}
\institute{Institute of Computer Science\\
    University of Wroclaw, Poland\\
    \email{gawry1@gmail.com}\\\mbox{}\\
    \and Research Group in Genome Informatics\\
    University of Bielefeld, Germany\\
    \email{travis.gagie@gmail.com}}
\maketitle

\begin{abstract}
A minimax tree is similar to a Huffman tree except that, instead of minimizing the weighted average of the leaves' depths, it minimizes the maximum of any leaf's weight plus its depth.  Golumbic (1976) introduced minimax trees and gave a Huffman-like, $\Oh{n \log n}$-time algorithm for building them.  Drmota and Szpankowski (2002) gave another $\Oh{n \log n}$-time algorithm, which checks the Kraft Inequality in each step of a binary search.  In this paper we show how Drmota and Szpankowski's algorithm can be made to run in linear time on a word RAM with \(\Omega (\log n)\)-bit words.  We also discuss how our solution applies to problems in data compression, group testing and circuit design.
\end{abstract}

\section{Introduction} \label{sec:intro}

In a minimax tree for a multiset \(W = \{w_1, \ldots, w_n\}\) of weights, each leaf has a weight $w_i$, each internal node has weight equal to the maximum of its children's weights plus 1, and the weight of the root is as small as possible.  In other words, if $\ell_i$ is the depth of the leaf with weight $w_i$, then \(\max_i \{w_i + \ell_i\}\) is minimized.  The weight of the root is called the minimax cost of $W$, denoted \(M (W)\).  Golumbic~\cite{Gol76} showed that if we modify Huffman's algorithm~\cite{Huf52} to repeatedly replace the two nodes with smallest weights $w_i$ and $w_j$, by a node with weight \(\max (w_i, w_j) + 1\) instead of \(w_i + w_j\), then it builds a minimax tree instead of a Huffman tree.  Like Huffman's algorithm, it takes $\Oh{n \log n}$ time and can build trees of any degree.  Our results in this paper also generalize to higher degrees and larger code alphabets but, for the sake of simplicity, we henceforth consider only binary trees and alphabets.  Golumbic, Parker~\cite{Par79} and Hoover, Klawe and Pippenger~\cite{HKP84} showed how to use Golumbic's algorithm to restrict circuits' fan-in and fan-out without greatly increasing their sizes or depths.  Drmota and Szpankowski~\cite{DS02,DS04} pointed out that, if \(P = p_1, \ldots, p_n\) is a probability distribution and each \(w_i = \log (1 / p_i)\), then a minimax tree for $W$ is the code-tree for a prefix code with minimum maximum pointwise redundancy with respect to $P$.  (As we are considering only binary trees in this paper, by $\log$ we always mean $\log_2$.)  They gave another $\Oh{n \log n}$-time algorithm for building minimax trees and, by analyzing it, proved bounds on the redundancy of arithmetic coding, which Baer~\cite{Bae08} recently improved by analyzing Golumbic's algorithm.  Drmota and Szpankowski start with a \mbox{Shannon} code~\cite{Sha48} for $P$, in which the codeword for the $i$th character has length \(\lceil \log (1 / p_i) \rceil\), for each $i$; they sort the logarithms by their fractional parts, i.e., \(\log (1 / p_1) - \lfloor \log (1 / p_1) \rfloor, \ldots, \log (1 / p_n) - \lfloor \log (1 / p_n) \rfloor\); and they use binary search to find the largest value $x$ such that \(\lceil \log (1 / p_1) - x \rceil, \ldots, \lceil \log (1 / p_n) - x \rceil\) obey the Kraft Inequality~\cite{Kra49}.  In a previous paper~\cite{Gag04} (see also~\cite{Gag07,KN??}) we noted that minimax trees built with Golumbic's algorithm have the same Sibling Property~\cite{Fal73,Gal78} as Huffman trees, and turned the Faller-Gallager-Knuth algorithm~\cite{Knu85} for dynamic Huffman coding into an algorithm for dynamic Shannon coding.  Intriguingly, although static Huffman coding is optimal and static Shannon coding is not, dynamic Shannon coding has a better worst-case bound than dynamic Huffman coding does.

Hu, Kleitman and Tamaki~\cite{HKT79} gave an $\Oh{n \log n}$-time algorithm for building alphabetic minimax trees, in which the leaves' weights, from left to right, must be in the given order.  Kirkpatrick and Klawe~\cite{KK85} and Coppersmith, Klawe and Pippenger~\cite{CKP86} gave an algorithm (or, more precisely, two algorithms that are equivalent when trees are binary) that builds an alphabetic minimax tree for integer weights in $\Oh{n}$ time, and showed how to use it to restrict circuits' fan-in and fan-out without greatly increasing their sizes or depths and without changing the numbers of edge crossings (and, thus, preserving planarity).  Kirkpatrick and Klawe also showed how to combine their algorithm with binary search in order to build alphabet minimax trees for real weights in $\Oh{n \log n}$ time.  We note that, if their algorithm for integer weights is viewed as an alphabetic analogue of the Kraft Inequality --- as it was by Yeung~\cite{Yeu91} and Nakatsu~\cite{Nak91}, who independently rediscovered it --- then their algorithm for real weights is the alphabetic analogue of Drmota and Szpankowski's.  Kirkpatrick and Przytycka~\cite{KP90} gave an $\Oh{\log n}$-time, $\Oh{n / \log n}$-processor algorithm for integer weights in the CREW PRAM model.  In another previous paper~\cite{Gag??} we used a data structure due to Kirkpatrick and Przytycka and a technique for generalized selection due to Klawe and Mumey~\cite{KM95}, to make Kirkpatrick and Klawe's algorithm for real weights run in $\Oh{\rule{0ex}{2ex} n \min (\log n, d \log \log n)}$ time, where $d$ is the number of distinct values $\lceil w_i \rceil$.  In this paper we prove a conjecture we made then, that a similar modification can make Drmota and Szpankowski's algorithm run in $\Oh{n}$ time.

\section{Applications} \label{sec:apps}

In the full version of this paper, we will consider all of the following problems:
\begin{enumerate}
\renewcommand{\theenumi}{\Alph{enumi}}
\item \label{prob:ptwise}
    build a prefix code with minimum maximum pointwise redundancy;
\item \label{prob:code}
    given a good estimate of the distribution over an alphabet, build a good prefix code;
\item \label{prob:test}
    given a good estimate of the distribution over a set, design a good group test to find the unique target;
\item \label{prob:reals}
    build a minimax tree for a multiset of real weights;
\item \label{prob:shannon}
    build a Shannon code;
\item \label{prob:depths}
    build a tree whose leaves have at most given depths;
\item \label{prob:circuit}
    restrict a circuit to have bounded fan-in or fan-out;
\item \label{prob:ints}
    build a minimax tree for a multiset of integer weights.
\end{enumerate}
The authors cited in the introduction have already shown, however, that Problem~\ref{prob:ptwise} takes $\Oh{n}$ more time than \ref{prob:reals}, \ref{prob:shannon} than \ref{prob:depths}, and \ref{prob:depths} and \ref{prob:circuit} than \ref{prob:ints}.  Therefore, in the current version of this paper, we consider only Problems~\ref{prob:code}, \ref{prob:test}, \ref{prob:reals} and \ref{prob:ints}.  In the remainder of this section we define what we mean by ``good'' in Problems~\ref{prob:code} and \ref{prob:test}, and show they take $\Oh{n}$ more time than \ref{prob:reals}.  Problems~\ref{prob:code} and \ref{prob:test} are, in fact, equivalent to each other and to \ref{prob:ptwise}, and analogous to a problem we considered in our paper~\cite{Gag??} on building alphabetic minimax trees.  In Section~\ref{sec:ints} we give two $\Oh{n}$-time algorithms for Problem~\ref{prob:ints}.  Finally, in Section~\ref{sec:reals} we show how to use either of those algorithms to obtain an algorithm for Problem~\ref{prob:reals} that takes $\Oh{n}$ time on a word RAM with \(\Omega (\log n)\)-bit words.  It follows that all the problems listed above take $\Oh{n}$ time.

Suppose we want to build a good prefix code with which to compress a file, but we are given only a sample of its characters.  Let \(P = p_1, \ldots, p_n\) be the normalized distribution of characters in the file, let \(Q = q_1, \ldots, q_n\) be the normalized distribution of characters in the sample and suppose our codewords are \(C = c_1, \ldots, c_n\).  An ideal code for $Q$ assigns the $i$th character a codeword of length \(\log (1 / q_i)\) (which may not be an integer), and the average codeword's length using such a code is \(H (P) + D (P \| Q)\), where \(H (P) = \sum_i p_i \log (1 / p_i)\) is the entropy of $P$ and \(D (P \| Q) = \sum_i p_i \log (p_i / q_i)\) is the relative entropy between $P$ and $Q$.  The entropy measures our expected surprise at a character drawn uniformly at random from the file, given $P$; the relative entropy (also known as the informational divergence or Kullback-Leibler pseudo-distance) measures the increase in our expected surprise when we estimate $P$ by $Q$, and is often used to quantify how well $Q$ approximates $P$ (see, e.g.,~\cite{CT06}).

Consider the best worst-case bound we can achieve, given only $Q$, on how much the average codeword's length exceeds \(H (P) + D (P \| Q)\).  A result by Katona and Nemetz~\cite{KN76} implies we do not generally achieve a constant bound on the difference when $C$ is a Huffman code for $Q$.  (Given $P$, of course, the best bound we could achieve on how much the average codeword's length exceeds \(H (P)\), would be the redundancy of a Huffman code for $P$.)  For example, if \(q_1, \ldots, q_n\) are proportional to \(F_n, \ldots, F_1\), where $F_i$ denotes the $i$th Fibonacci number (i.e., \(F_1 = F_2 = 1\) and \(F_i = F_{i - 1} + F_{i - 2}\) for \(i \geq 3\)), then the codewords' lengths are \(1, \ldots, n - 2, n - 1, n - 1\) in any Huffman code for $Q$.  If $p_n$ is sufficiently close to 1, then
\begin{eqnarray*}
\lefteqn{H (P) + D (P \| Q)}\\
& \approx & \log (1 / q_n)\\
& = & \log \sum_{i = 1}^n F_i\\
& = & n \log \phi + \Oh{1}
\end{eqnarray*}
but the average codeword's length \(\sum_i p_i |c_i| \approx n - 1\), so for large $n$ the difference is about \((1 / \log \phi - 1) n \approx 0.44 n\), where \(\phi \approx 1.62\) is the golden ratio.

As long as \(q_i > 0\) whenever \(p_i > 0\), the average codeword's length
\begin{eqnarray*}
\sum_i p_i |c_i|
& = & \sum_i p_i \left( \rule{0ex}{2.5ex}
    \log (1 / p_i) + \log (p_i / q_i) + \log q_i + |c_i| \right)\\
& = & H (P) + D (P \| Q) + \sum_i p_i (\log q_i + |c_i|)
\end{eqnarray*}
(if \(q_i = 0\) but \(p_i > 0\) for some $i$, then \(D (P \| Q)\) is infinite). Notice each $|c_i|$ is the length of a branch in the code-tree for $C$. Therefore, the best bound we can achieve is
\begin{eqnarray*}
\lefteqn{\min_C \max_P \left\{ \sum_i p_i (\log q_i + |c_i|) \right\}}\\
& = & \min_C \max_i \{\log q_i + |c_i|\}\\
& = & M (\log q_1, \ldots, \log q_n)\,,
\end{eqnarray*}
which is less than 1, by inspection of Drmota and Szpankowski's algorithm.  (Recall that \(M (\log q_1, \ldots, \log q_n)\) denotes the minimax cost of \(\{\log q_1, \ldots, \log q_n\}\), i.e., the weight of the root of a minimax tree for \(\{\log q_1, \ldots, \log q_n\}\).)  Moreover, we achieve this bound when the code-tree for $C$ has the same shape as a minimax tree for \(\{\log q_1, \ldots, \log q_n\}\).  In other words, Problem~\ref{prob:code} takes $\Oh{n}$ more time than \ref{prob:reals}.

Now suppose we want to design a good group test (see, e.g.,~\cite{AW87,Aig88}) to find the unique target in a set, given only an estimate $Q$ --- presumably gained from past experience or experimentation --- of the probability distribution $P$ according to which the target is chosen.  A group test allows us to choose, repeatedly, a subset of the elements and check whether the target is among them.  We can represent a group test as a decision tree in which each leaf is labelled with an element and each internal node is labelled with the concatenation of its children's labels.  Because such a decision tree can be viewed as the code-tree for a prefix code, and vice versa, the expected number of checks we make exceeds \(H (P) + D (P \| Q)\) by as little as possible when the decision tree for our group test has the same shape as a minimax tree for \(\{\log q_1, \ldots, \log q_n\}\).  In other words, Problem~\ref{prob:test} is equivalent to \ref{prob:code} and, therefore, also takes $\Oh{n}$ more time than \ref{prob:reals}.

We are currently studying whether either Drmota and Szpankowski's solution to Problem~\ref{prob:ptwise} or our solution to~\ref{prob:code} can give us an intuitive explanation of why dynamic Shannon coding has a better worst-case bound than dynamic Huffman coding does.  On the one hand, worst-case bounds (especially for online algorithms; see, e.g.,~\cite{BE98}) are often proven by considering a game between the algorithm and an omniscient adversary, and minimizing the maximum pointwise redundancy at each step seems somehow related (more than just by name) to the minimax strategy for the algorithm.  On the other hand, dynamic prefix coding can be viewed as a procedure in which we repeatedly build a prefix code based on a sample --- i.e., the characters already encoded.

\section{Minimax Trees for Integer Weights} \label{sec:ints}

In this section we give two $\Oh{n}$-time algorithms for building a minimax tree for a multiset of integer weights, both based on the following lemma (which we note applies to any weights, not only integers) and corollary:

\begin{lemma} \label{lem:replacement}
If \(W = \{w_1, \ldots, w_n\}\) is a multiset of weights and
\[\textstyle W'
= \left\{ \rule{0ex}{2.5ex}
    \max \left( \rule{0ex}{2ex} w_1, \max_i \{w_i\} - n + 1 \right), \ldots,
    \max \left( \rule{0ex}{2ex} w_n, \max_i \{w_i\} - n + 1 \right) \right\}\,,\]
then \(M (W') = M (W)\).  Moreover, any minimax tree for $W'$ becomes a minimax tree for $W$ when we replace the leaves' weights equal to \(\max_i \{w_i\} - n + 1\) by the weights in $W$ less than or equal to \(\max_i \{w_i\} - n + 1\), in any order.
\end{lemma}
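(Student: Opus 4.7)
Let $M = \max_i \{w_i\}$, so that $W'$ is obtained from $W$ by raising every weight strictly less than $M - n + 1$ up to the threshold $M - n + 1$. The plan is to establish $M(W) = M(W')$ via two inequalities, and then verify the reconstruction recipe in the ``moreover'' clause.

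The inequality $M(W) \le M(W')$ is pure monotonicity: $w_i \le w'_i$ for every $i$, so any tree shape feasible for $W'$ is at least as good for $W$ and the minimax cost is non-decreasing in each weight. The reverse inequality $M(W') \le M(W)$ is the interesting direction, and rests on a single observation that I expect to be the main point: in any binary tree with $n$ leaves every leaf sits at depth at most $n - 1$. Combined with the trivial lower bound $M(W) \ge M$ (the leaf labelled $M$ already contributes at least $M$), this yields, for any leaf of depth $\ell_i$ in a minimax tree for $W$ whose label $w_i$ satisfies $w_i \le M - n + 1$, the estimate $(M - n + 1) + \ell_i \le (M - n + 1) + (n - 1) = M \le M(W)$. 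Hence relabelling each such leaf by $M - n + 1$ produces a tree feasible for $W'$ of cost at most $M(W)$, giving $M(W') \le M(W)$.

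For the ``moreover'' clause, I would take any minimax tree $T'$ for $W'$. By construction the number of leaves of $T'$ carrying the label $M - n + 1$ is exactly $|\{i : w_i \le M - n + 1\}|$, so any bijection between these leaves and the small-weight elements of $W$ yields a tree whose leaf-labels form precisely the multiset $W$. Applying the same depth bound once more, every relabelled leaf has weight-plus-depth at most $M \le M(W')$, while every untouched leaf is already bounded by $M(W')$; so the new tree has root weight at most $M(W') = M(W)$ and is therefore a minimax tree for $W$. The whole argument hinges on the depth bound $\ell_i \le n - 1$ together with $M(W) \ge M$; once this is spotted, both directions of the equality and the reconstruction reduce to the same one-line estimate.
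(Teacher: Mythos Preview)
Your argument is essentially the paper's: both hinge on the observation that a (strictly binary) tree with $n$ leaves has height at most $n-1$, together with $M(W) \ge \max_i w_i$, giving $(M-n+1)+\ell \le M \le M(W)$ for the raised leaves. One omission: your sentence ``in any binary tree with $n$ leaves every leaf sits at depth at most $n-1$'' is false without the qualifier \emph{strictly} (full) binary; the paper makes this WLOG explicit, and you should too.

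For the ``moreover'' clause the paper takes a slightly different and cleaner route: rather than re-invoking the depth bound on $T'$, it simply observes that replacing each label $M-n+1$ by some $w_j \le M-n+1$ can only \emph{decrease} leaf weights, so the root weight cannot rise above $M(W')=M(W)$, and by definition it cannot fall below $M(W)$. This monotonicity argument works for any minimax tree $T'$, whereas your version needs $T'$ itself to have depth at most $n-1$, which again requires the strictly-binary assumption (and the lemma as stated says ``any minimax tree for $W'$''). The fix is immediate either way, but the paper's formulation avoids the extra hypothesis.
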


\begin{proof}
Consider a minimax tree $T$ for $W$.  Without loss of generality, we can assume $T$ is strictly binary --- i.e., that every internal node has exactly two children --- and, therefore, that it has height at most \(n - 1\).  (Recall that, for simplicity, we consider only binary trees.)  If \(n = 1\), then \(W = w_1 = \max_i \{w_i\} - n + 1\).  Otherwise, all the leaves have depth at least 1, so \(M (W) \geq \max_i \{w_i\} + 1\).  Consider any leaf (if one exists) with weight less than \(\max_i \{w_i\} - n + 1\) and depth $\ell$.  Since \(\max_i \{w_i\} - n + 1 + \ell \leq \max_i \{w_i\} < M (W)\), increasing that leaf's weight to \(\max_i \{w_i\} - n + 1\) and updating its ancestors' weights, does not change the weight \(M (W)\) of the root.  It follows that \(M (W') = M (W)\).

Now consider a minimax tree $T'$ for $W'$.  If we replace the leaves' weights equal to \(\max_i \{w_i\} - n + 1\) by the weights in $W$ less than or equal to \(\max_i \{w_i\} - n + 1\) and update all the nodes' weights, then the weight \(M (W')\) of the root cannot increase nor, by definition, decrease to less than \(M (W)\).  Since \(M (W') = M (W)\), it follows that the re-weighted tree is a minimax tree for $W$. \qed
\end{proof}

\begin{corollary} \label{cor:sorting}
When all the weights in $W$ are integers, we can sort $W'$ in $\Oh{n}$ time.
\end{corollary}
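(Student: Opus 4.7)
The plan is to observe that the multiset $W'$ lives in a very small integer range and then to apply counting sort. Concretely, let $M_{\max} = \max_i\{w_i\}$, which can be computed by a single linear scan. By construction, every element of $W'$ satisfies
\[
M_{\max} - n + 1 \;\le\; w'_i \;\le\; M_{\max},
\]
so all $n$ values of $W'$ are integers drawn from an interval containing exactly $n$ consecutive integers.

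Given this, the sorting step is standard. I would compute the threshold $t = M_{\max} - n + 1$ in $O(1)$ time, then in a single pass over the input form $w'_i = \max(w_i, t)$ and use the shifted value $w'_i - t \in \{0, 1, \ldots, n - 1\}$ as a bucket index. Counting sort with these $n$ buckets sorts $W'$ in $O(n)$ time and $O(n)$ space, which is what we need. Note that we only require $M_{\max}$ to fit in a word of the word RAM, but since the weights themselves are part of the input this is automatic.

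There is no real obstacle here: the content of the corollary is entirely in Lemma~\ref{lem:replacement}, which has already compressed the dynamic range of the weights to $n$. The only thing to check is that the arithmetic involved ($\max_i\{w_i\}$, subtraction, and the $\max$ with $t$) is legitimate in $O(n)$ time on integer weights, which it is.
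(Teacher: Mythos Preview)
Your argument is correct and matches the paper's proof: both observe that $W'$ lies in an integer interval of length $n-1$, namely $[\max_i\{w_i\}-n+1,\max_i\{w_i\}]$, and then sort by bucketing on that range. The paper phrases this as ``direct addressing'' (and also notes that an in-place radix sort would avoid the $\Oh{n}$ extra space), but the content is identical to your counting-sort description.
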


\begin{proof}
When all the weights in $W$ at least \(\max_i \{w_i\} - n + 1\) are integers, all the weights in $W'$ are integers in the interval \(\left[ \rule{0ex}{2ex} \max_i \{w_i\} - n + 1, \max_i \{w_i\} \right]\).  Since this interval has length \(n - 1\), we can sort $W'$ in $\Oh{n}$ time using either direct addressing, which takes $\Oh{n}$ extra space, or radix sort, which takes no extra space~\cite{FMP07}. \qed
\end{proof}

For our first algorithm, we build and sort $W'$; build a minimax tree for $W'$ using a implementation of Golumbic's algorithm that takes $\Oh{n}$ time when the weights are already sorted; and replace the leaves' weights equal to \(\max_i \{w_i\} - n + 1\) by the weights in $W$ less than or equal to \(\max_i \{w_i\} - n + 1\).  We note that Van Leeuwen~\cite{VLe76} showed how to implement Huffman's algorithm to take $\Oh{n}$ when the weights are already sorted.  We could implement Golumbic's algorithm analogously, but we think the implementation below is simpler.

\begin{lemma} \label{lem:sorted}
Golumbic's algorithm can be implemented to take $\Oh{n}$ time when the weights are already sorted.
\end{lemma}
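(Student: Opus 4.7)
The plan is to adapt Van~Leeuwen's two-queue implementation of Huffman's algorithm. We maintain two queues: $Q_1$ contains the original weights in sorted order, and $Q_2$, initially empty, will hold the weights of newly-created internal nodes in the order they are formed. At each step, we extract the two smallest elements in $Q_1 \cup Q_2$ by comparing the heads of the two queues twice, combine them into a new internal node of weight \(\max + 1\), and enqueue the result at the tail of $Q_2$. Each step costs $\Oh{1}$, and there are $n - 1$ steps, for a total of $\Oh{n}$ time.

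Correctness requires two things. First, the heads of $Q_1$ and $Q_2$ always contain the two smallest weights currently under consideration; this is immediate provided both queues remain sorted. $Q_1$ stays sorted because we only dequeue from its head. The only nontrivial point is that $Q_2$ also remains sorted, i.e., that the weights of the newly-created internal nodes form a nondecreasing sequence.

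To establish this, let $P_i$ denote the multiset of weights available just before the $i$th combination, and let $a_i \leq b_i$ be the two smallest elements of $P_i$, so that the new node has weight $\omega_i = b_i + 1$. Then $P_{i+1} = (P_i \setminus \{a_i, b_i\}) \cup \{\omega_i\}$. Every element of $P_i \setminus \{a_i, b_i\}$ is at least $b_i$ by definition of $a_i$ and $b_i$, and $\omega_i = b_i + 1 \geq b_i$ as well, so every element of $P_{i+1}$ is at least $b_i$. In particular $b_{i+1} \geq b_i$ and therefore $\omega_{i+1} \geq \omega_i$. Hence the sequence enqueued onto $Q_2$ is nondecreasing, and $Q_2$ stays sorted.

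The main obstacle is just this monotonicity argument; once it is in place, the two-queue scheme runs in linear time exactly as in Van~Leeuwen's implementation of Huffman's algorithm, and the resulting tree is produced by Golumbic's algorithm by construction.
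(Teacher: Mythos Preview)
Your argument is correct. You adopt Van~Leeuwen's two-queue scheme and prove the one nontrivial fact it needs, namely that the weights $\omega_i = b_i + 1$ of successive internal nodes are nondecreasing; your proof of this monotonicity is clean and complete.

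The paper takes a different but closely related route. Rather than two queues, it keeps a single sorted linked list together with a pointer that marks the current insertion point; at each step it advances the pointer to the last element not exceeding the new weight, removes the two front elements, and inserts the new node just after the pointer. The linear bound then follows from the assertion that ``the pointer passes over each node once''. That assertion, however, is exactly the monotonicity you prove explicitly: the pointer never moves left precisely because each new weight $\omega_{i+1}$ is at least $\omega_i$. So the two implementations rest on the same key fact; you spell it out, while the paper leaves it to the reader. In exchange, the paper's single-list version avoids the bookkeeping of two separate structures, which is why the authors prefer it despite acknowledging that the Van~Leeuwen adaptation (your approach) would also work.
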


\begin{proof}
We start with the weights stored in a linked list in nondecreasing order, and set a pointer to the head of the list.  We then repeat the following procedure until there is only one node left in the list, which is the root of a minimax tree for the given weights: we move the pointer along the list to the last weight less than or equal to the maximum of the first two weights plus 1; remove the first two nodes from the list; make those nodes the children of a new node with weight equal to the maximum of their weights plus one; and insert the new node immediately to the right of the pointer.  Notice we remove two nodes for each one we insert, so the total number of nodes is \(2 n - 1\).  Therefore, since the pointer passes over each node once, this implementation takes $\Oh{n}$ time.
\qed
\end{proof}

\noindent Building and sorting $W'$ takes $\Oh{n}$ time, by Corollary~\ref{cor:sorting}; building a minimax tree for $W'$ takes $\Oh{n}$ time, by Lemma~\ref{lem:sorted}; replacing the leaves' weights equal to \(\max_i \{w_i\} - n + 1\) by the weights in $W$ less than or equal to \(\max_i \{w_i\} - n + 1\) takes $\Oh{n}$ time, because it can be done in any order.  By Lemma~\ref{lem:replacement}, the resulting tree is a minimax tree for $W$.

\begin{theorem} \label{thm:ints}
Given a multiset $W$ of $n$ integer weights, we can build a minimax tree for $W$ in $\Oh{n}$ time.
\end{theorem}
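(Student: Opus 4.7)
The plan is simply to stitch together the three ingredients that have already been established: Lemma~\ref{lem:replacement} to reduce $W$ to the truncated multiset $W'$, Corollary~\ref{cor:sorting} to sort $W'$ in linear time, and Lemma~\ref{lem:sorted} to run Golumbic's algorithm in linear time once the weights are sorted. So the first step is to compute $\max_i \{w_i\}$ in one scan, then form $W'$ by replacing every weight strictly smaller than $\max_i \{w_i\} - n + 1$ with that threshold value; this takes $\Oh{n}$ time and costs $\Oh{n}$ additional space to record, for each leaf slot, the original weight that was raised (so we can restore it later).

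Next I would sort $W'$. By Corollary~\ref{cor:sorting}, the entries of $W'$ are integers in an interval of length $n-1$, so either a direct-address bucket sort or an in-place radix sort produces the sorted sequence in $\Oh{n}$ time. Feeding this sorted list into the implementation of Golumbic's algorithm described in Lemma~\ref{lem:sorted} then builds a minimax tree $T'$ for $W'$ in a further $\Oh{n}$ steps.

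Finally I would reverse the truncation. The leaves of $T'$ that carry the value $\max_i \{w_i\} - n + 1$ correspond, in number, exactly to the weights of $W$ that are at most $\max_i \{w_i\} - n + 1$, and Lemma~\ref{lem:replacement} says these leaves can be re-labelled with those original weights in any order without disturbing minimax optimality. One linear pass that pairs off those leaves with the stored original weights yields a minimax tree for $W$ itself. Each of the four phases runs in $\Oh{n}$ time, so the total is $\Oh{n}$.

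Because every substantive combinatorial step has already been discharged by Lemma~\ref{lem:replacement}, Corollary~\ref{cor:sorting}, and Lemma~\ref{lem:sorted}, there is no real obstacle left; the only thing to be careful about is the bookkeeping between $W$ and $W'$, namely making sure that the identities of the small weights are kept around so that the final re-labelling can be performed in $\Oh{n}$ time rather than requiring a search.
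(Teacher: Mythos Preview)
Your proposal is correct and matches the paper's own argument essentially step for step: build and sort $W'$ via Corollary~\ref{cor:sorting}, run the linear-time Golumbic implementation of Lemma~\ref{lem:sorted} on the sorted list, and then invoke Lemma~\ref{lem:replacement} to restore the small weights. The only addition you make is the explicit bookkeeping remark about tracking which original weights were truncated, which the paper leaves implicit.
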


Our second algorithm differs in its second step: instead of using Golumbic's algorithm to build a minimax tree for $W'$, we use Kirkpatrick and Klawe's $\Oh{n}$-time algorithm for integer weights to build an alphabetic minimax tree for the sequence $V$ consisting of the weights in $W'$ in non-increasing order.  The algorithm's correctness follows from the Kraft Inequality:

\begin{theorem}[Kraft, 1949] \label{thm:kraft}
If there exists a binary tree whose leaves have depths \(\ell_1, \ldots, \ell_n\), then \(\sum_i 1 / 2^{\ell_i} \leq 1\).  Conversely, if \(\sum_i 1 / 2^{\ell_i} \leq 1\) and \(\ell_1 \leq \cdots \leq \ell_n\), then there exists an ordered binary tree whose leaves, from left to right, have depths \(\ell_1, \ldots, \ell_n\).
\end{theorem}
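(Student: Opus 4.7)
The plan is to prove the two implications separately, since both are standard and well-known but worth stating carefully.

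For the forward direction, I would extend the given binary tree (if necessary) by adding descendants so that every leaf at depth $\ell_i < h$ is turned into the root of a complete binary subtree reaching depth $h = \max_i \ell_i$; the original $i$th leaf then has exactly $2^{h - \ell_i}$ descendants at depth $h$. Because distinct leaves of the original tree have disjoint subtrees, the sets of descendants at depth $h$ are pairwise disjoint subsets of the at most $2^h$ nodes at that depth of the completed tree. Hence $\sum_i 2^{h - \ell_i} \leq 2^h$, and dividing through by $2^h$ gives $\sum_i 1 / 2^{\ell_i} \leq 1$.

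For the converse I would give a greedy left-to-right construction. Identify each node at depth $\ell$ of the infinite complete binary tree with the dyadic interval $[k / 2^{\ell}, (k+1) / 2^{\ell}) \subseteq [0, 1)$ determined by its binary address, so that an interval corresponds to a leaf's subtree in the infinite tree. Set $s_1 = 0$ and, for $i \geq 2$, let $s_i = \sum_{j < i} 1 / 2^{\ell_j}$; assign the $i$th leaf to the interval $[s_i, s_i + 1 / 2^{\ell_i})$. Two things need to be checked: that this interval is actually a valid dyadic interval at depth $\ell_i$ (i.e.\ that $s_i$ is a multiple of $1 / 2^{\ell_i}$), and that it lies inside $[0, 1)$. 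The first is immediate from $\ell_j \leq \ell_i$ for $j < i$, which makes every $1 / 2^{\ell_j}$ an integer multiple of $1 / 2^{\ell_i}$. The second follows from $s_i + 1 / 2^{\ell_i} = \sum_{j \leq i} 1 / 2^{\ell_j} \leq \sum_{j} 1 / 2^{\ell_j} \leq 1$. By construction the chosen intervals are pairwise disjoint and ordered from left to right, so taking the corresponding nodes as leaves (and pruning the rest of the infinite tree) yields an ordered binary tree whose leaves, read left to right, have depths $\ell_1, \ldots, \ell_n$.

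The one place that actually requires a little thought is the divisibility claim $2^{\ell_i} s_i \in \mathbb{Z}$ in the converse; everything else is essentially bookkeeping. A reader who prefers to avoid the dyadic interval viewpoint can give the same argument combinatorially by maintaining, after $i$ leaves have been placed, the sorted list of "available" nodes in the infinite binary tree and picking the leftmost available one at depth $\ell_{i+1}$; the nondecreasing order of the $\ell_j$ together with the hypothesis $\sum 1 / 2^{\ell_j} \leq 1$ guarantees such a node always exists. Either formulation makes the ordered nature of the resulting tree transparent, which is the only subtlety beyond the classical (unordered) Kraft statement.
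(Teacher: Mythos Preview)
Your argument is correct; both directions are the standard textbook proofs of Kraft's inequality, and the extra care you take with the divisibility of $s_i$ is exactly what is needed to get the ordered (left-to-right) conclusion rather than just the unordered one.

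There is nothing to compare against, however: the paper does not prove this theorem at all. It is quoted as a classical 1949 result and used as a black box (the authors only invoke its second part together with an exchange argument to justify that an alphabetic minimax tree for the sorted weights is a minimax tree for $W'$). So your write-up is not an alternative to the paper's proof but a self-contained supplement for a statement the paper takes for granted.
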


\noindent By the latter part of Theorem~\ref{thm:kraft} and a standard exchange argument --- i.e., if a minimax tree contains two leaves such that the deeper one has a higher weight than the shallower one, then we can swap their weights --- there exists a minimax tree for $W'$ in which the leaves' weights are non-increasing from left to right.  Therefore, by definition, any alphabetic minimax tree for $V$ is a minimax tree for $W'$.

\section{Minimax Trees for Real Weights} \label{sec:reals}

Strictly speaking, Drmota and Szpankowski's algorithm works only when given a multiset of weights equal to \(\{\log p_1, \ldots, \log p_n\}\) for some probability distribution \(P = p_1, \ldots, p_n\).  For any value $c$, however, if \(W = \{w_1, \ldots, w_n\}\) and \(W' = \{w_1 + c, \ldots, w_n + c\}\) then, by definition, \(M (W') = M (W) + c\) and any minimax tree for $W'$ becomes a minimax tree for $W$ when we subtract $c$ from each leaf's weight.  In particular, if \(c = - \log \left( \sum_i 2^{w_i} \right)\) then \(\sum_i 2^{w_i + c} = 2^c \sum_i 2^{w_i} = 1\); therefore, \(W' = \{\log p_1, \ldots, \log p_n\}\) for some probability distribution \(P = p_1, \ldots, p_n\) and we can use Drmota and Szpankowski's algorithm to build minimax trees for $W'$ and, thus, for $W$.  Without loss of generality, we henceforth assume the given multiset $W$ of weights is equal to \(\{\log p_1, \ldots, \log p_n\}\) for some probability distribution $P$ (so each \(w_i \leq 0\)).

\begin{theorem}[Drmota and Szpankowski, 2002] \label{thm:D&S}
If \(W = \{w_1, \ldots, w_n\}\) is a multiset of weights, \(X = \{x_1, \ldots, x_n\} = \{|w_1| - \lfloor |w_1| \rfloor, \ldots, |w_n| - \lfloor |w_n| \rfloor\}\) and $x_i$ is the largest element in \(X \cup \{0\}\) such that
\[\sum_{x_j \leq x_i} 1 / 2^{\lfloor |w_j| \rfloor} +
    \sum_{x_j > x_i} 1 / 2^{\lceil |w_j| \rceil}
\leq 1\,,\]
then any minimax tree for \(\{- \lfloor |w_j| \rfloor\,:\,x_j \leq x_i\} \cup \{- \lceil |w_j| \rceil\,:\,x_j > x_i\}\) becomes a minimax tree for $W$ when we replace each leaf's weight $- \lfloor |w_j| \rfloor$ or $- \lceil |w_j| \rceil$ by $w_j$.
\end{theorem}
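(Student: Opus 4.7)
The plan is to bracket $M(W)$ above and below by the same quantity $M_0 := \max\bigl(\max\{-x_j : x_j \leq x_i\},\ \max\{1-x_j : x_j > x_i\}\bigr)$, and then read off a minimax tree for $W$ from any minimax tree for $W^*$, where $W^*$ denotes the rounded multiset in the theorem.

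First I would check that $x_i$ is well-defined: taking $x_i = 0$ gives a Kraft-like sum bounded above by $\sum_j 2^{-|w_j|} = \sum_j 2^{w_j} = 1$, so the set of valid thresholds is nonempty. For the upper bound $M(W) \leq M_0$, the converse direction of Theorem~\ref{thm:kraft} applied to the hypothesis on $x_i$ produces a binary tree with leaf depths $\ell_j = \lfloor|w_j|\rfloor$ (floor group) or $\ell_j = \lceil|w_j|\rceil$ (ceiling group), proving $M(W^*) \leq 0$. Hence in \emph{any} minimax tree $T^*$ for $W^*$, each leaf depth $\ell_j$ satisfies $\ell_j + w_j^* \leq 0$, i.e.\ $\ell_j \leq |w_j^*|$, and after restoring the original weights each leaf contributes $\ell_j - |w_j| \leq |w_j^*| - |w_j|$, which equals $-x_j$ or $1 - x_j$ depending on the group; taking the max gives restored root weight at most $M_0$.

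The main obstacle is the matching lower bound $M(W) \geq M_0$. For any tree $T$ for $W$ of root weight $\tau$, the depths obey $\ell_j \leq \lfloor \tau + |w_j| \rfloor$, so Theorem~\ref{thm:kraft} yields $\sum_j 2^{-\lfloor \tau + |w_j| \rfloor} \leq 1$. Writing $y := 1 - \tau$ and splitting by whether $x_j < y$ or $x_j \geq y$, this becomes $\sum_{x_j < y} 2^{-\lfloor|w_j|\rfloor} + \sum_{x_j \geq y} 2^{-\lceil|w_j|\rceil} \leq 1$. If $\tau < 1 - x_{i^+}$, where $x_{i^+}$ is the next element of $X$ strictly above $x_i$, then $y > x_{i^+}$; term by term, this sum dominates the Kraft-like sum associated with threshold $x_{i^+}$, which by the maximality of $x_i$ exceeds $1$ --- a contradiction. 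Combined with $\tau \geq 0$, which follows from $\sum_j 2^{w_j} = 1$ and Kraft applied to $T$, we obtain $\tau \geq M_0$.

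With $M(W) = M_0$ established, the upper-bound argument already gives that the restoration of any minimax tree for $W^*$ produces a tree for $W$ of root weight at most $M(W)$, hence equal to $M(W)$ --- a minimax tree. When several $w_j$'s share a rounded value in $W^*$ the matching of originals to leaves is not unique, but the bound $\ell_j - |w_j| \leq |w_j^*| - |w_j|$ depends only on the group and on the fractional part of the $w_j$ being matched, so every valid matching yields the same root-weight bound. The one edge case, where $x_{i^+}$ does not exist, forces $W^* = W$ (all $|w_j|$ already integer) and the conclusion is immediate.
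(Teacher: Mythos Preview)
The paper does not prove this theorem: it is stated as a cited result of Drmota and Szpankowski and used as a black box in Section~4. There is therefore no proof in the paper to compare your attempt against.

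Your argument is nonetheless sound. The well-definedness of $x_i$ and the upper bound $M(W)\le M_0$ via $M(W^*)\le 0$ are straightforward; the crux is the lower bound, and your contradiction from the maximality of $x_i$ works: with $0\le\tau<1-x_{i^+}$ one has $\lfloor\tau+|w_j|\rfloor=\lfloor|w_j|\rfloor$ whenever $x_j\le x_{i^+}$ and $\lfloor\tau+|w_j|\rfloor\le\lceil|w_j|\rceil$ otherwise, so the Kraft sum for $T$ dominates the threshold-$x_{i^+}$ sum, which exceeds $1$ by maximality. Your treatment of the matching ambiguity is also correct, since the bound $\ell-|w_j|\le|w_j^*|-|w_j|$ depends only on the leaf's weight $w^*$ and the $w_j$ assigned to it, not on which particular leaf of weight $w^*$ was chosen. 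One small point worth making explicit in the edge case: the reason $x_{i^+}$ failing to exist forces all $|w_j|$ integral is that the Kraft sum at threshold $x_i$ then equals $\sum_j 2^{-\lfloor|w_j|\rfloor}$, which is both $\le 1$ by hypothesis and $\ge\sum_j 2^{w_j}=1$ termwise, so equality holds term by term. Note finally that both the well-definedness check and the bound $\tau\ge 0$ rely on the normalization $\sum_j 2^{w_j}=1$, which is not part of the theorem statement itself but is the standing assumption the paper imposes just before it.
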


If \(x_1 \leq \cdots \leq x_n\) and \(x_i > 0\) then, by Theorem~\ref{thm:D&S}, $i$ is the largest index such that \(\{\lfloor |w_j| \rfloor\,:\,x_j \leq x_i\} \cup \{\lceil |w_j| \rceil\,:\,x_j > x_i\}\) satisfies the Kraft Inequality.  To build a minimax tree for $W$ with Drmota and Szpankowski's algorithm, we compute and sort $X$; use binary search to find $i$, in each round testing whether the Kraft Inequality holds; build a minimax tree for \(\{- \lfloor |w_1| \rfloor, \ldots, - \lfloor |w_i| \rfloor, - \lceil |w_{i + 1}| \rceil,\) \(\ldots, - \lceil |w_n| \rceil\}\); and replace each leaf's weight $- \lfloor |w_j| \rfloor$ or $- \lceil |w_j| \rceil$ by $w_j$.  Our version differs in three ways: we use generalized selection instead of sorting and binary search; we use a new data structure to test the Kraft Inequality; and we use either of our algorithms from Section~\ref{sec:ints} to build the minimax tree for \(\{- \lfloor |w_1| \rfloor, \ldots, - \lfloor |w_i| \rfloor, - \lceil |w_{i + 1}| \rceil, \ldots, - \lceil |w_n| \rceil\}\).  In the remainder of this section we first show how to use generalized selection to find $i$ in $\Oh{n}$ time, excluding the time needed to test the Kraft Inequality; we then show how to perform all the necessary tests in a total of $\Oh{n}$ time on a word RAM with \(\Omega (\log n)\)-bit words, using our new data structure.  Since each of our algorithms from Section~\ref{sec:ints} takes $\Oh{n}$ time, it follows that we can build a minimax tree for $W$ in $\Oh{n}$ time.

To find $x_i$ in $\Oh{n}$ time with general selection, we start with the multiset \(X_1 = X \cup \{0\}\) and repeat the following procedure until we reach the empty set: in the $r$th round, we use the linear-time selection algorithm due to Blum {\it et al.}~\cite{BFP+73} to find the current multiset $X_r$'s median $x_m$, then test whether
\[\sum_{x_j \leq x_m} 1 / 2^{\lfloor |w_j| \rfloor} +
    \sum_{x_j > x_m} 1 / 2^{\lceil |w_j| \rceil}
\leq 1\,;\]
if so, we remove those elements of $X_r$ that are less than or equal to $x_m$ and recurse on the resulting multiset; if not, we remove those elements of $X_r$ that are greater than or equal to $x_m$ and recurse.  The element $x_i$ is the largest median we consider for which the test is positive.  Since the size of the multisets decreases by a factor of at least 2 in each round, we use $\Oh{\log n}$ rounds and we find all the medians in a total of $\Oh{n}$ time.

By the same arguments we used to prove Lemma~\ref{lem:replacement}, we can assume, without loss of generality, that \(\lceil |w_j| \rceil \leq n - 1\) for each $j$.  To test the Kraft Inequality, we use a data structure consisting of two $n$-bit binary fractions, $S_1$ and $S_2$, each broken into \((\log n)\)-bit blocks and initially set to 0.  For \(1 \leq k \leq n - 1\), adding \(1 / 2^k\) to either fraction takes $\Oh{1}$ amortized time, for the same reason that incrementing a binary counter takes $\Oh{1}$ amortized time (see, e.g.,~\cite[Section 17.3]{CLRS01}).  On a word RAM with \(\Omega (\log n)\)-bit words, nondestructively testing whether \(S_1 + S_2 \leq 1\) takes $\Oh{n / \log n}$ time, because adding each corresponding pair of blocks takes $\Oh{1}$ time and, by induction, the number carried from each pair to the next is at most 1; resetting either fraction to 0 takes $\Oh{1}$ time for each block, i.e., $\Oh{n / \log n}$ time in total.

Before starting to search for $x_i$, we set \(S_1 = \sum_j 1 / 2^{\lceil |w_j| \rceil}\) in $\Oh{n}$ time.  Throughout our generalized selection, we maintain the invariant that, at the beginning of the $r$th round,
\[S_1 = \sum_j 1 / 2^{\lceil |w_j| \rceil} +
    \sum_{0 < x_j < \min (X_r)} 1 / 2^{\lceil |w_j| \rceil}\]
and \(S_2 = 0\).  In the $r$th round, we set
\[S_2 = \sum_{\min (X_r) \leq x_j \leq x_m} 1 / 2^{\lceil |w_j| \rceil}\]
in $\Oh{|X_r|}$ time.  Since
\begin{eqnarray*}
S_1 + S_2
& = & \sum_j 1 / 2^{\lceil |w_j| \rceil} +
    \sum_{0 < x_j < \min (X_r)} 1 / 2^{\lceil |w_j| \rceil} +
    \sum_{\min (X_r) \leq x_j \leq x_m} 1 / 2^{\lceil |w_j| \rceil}\\
& = & \sum_{x_j \leq x_m} 1 / 2^{\lfloor |w_j| \rfloor} +
    \sum_{x_j > x_m} 1 / 2^{\lceil |w_j| \rceil}\,,
\end{eqnarray*}
we can test the Kraft Inequality in $\Oh{n / \log n}$ time by checking whether \(S_1 + S_2 \leq 1\).  If the test is positive, then we add $S_2$ to $S_1$ in $\Oh{n / \log n}$ time; if the test is negative, then we do not change $S_1$.  In either case, straightforward calculation shows that, afterwards,
\[S_1 = \sum_j 1 / 2^{\lceil |w_j| \rceil} +
    \sum_{0 < x_j < \min (X_{r +1})} 1 / 2^{\lceil |w_j| \rceil}\,\]
so the first part of our invariant is maintained.  Finally, we reset \(S_2 = 0\) in $\Oh{n / \log n}$ time, so the second part of our invariant is maintained.  Since \(|X_r| = \Oh{n / 2^r}\), the $r$th round takes a total of $\Oh{n / 2^r + n / \log n}$ time.  Since \(\sum_{r \geq 1} n / 2^r = n\) and we use $\Oh{\log n}$ rounds, it follows that our whole generalized selection takes $\Oh{n}$ time.  This completes the proof of our main result:

\begin{theorem} \label{thm:reals}
Given a multiset $W$ of $n$ real weights, we can build a minimax tree for $W$ in $\Oh{n}$ time on a word RAM with \(\Omega (\log n)\)-bit words.
\end{theorem}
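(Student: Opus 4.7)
The plan is to assemble Theorem~\ref{thm:reals} directly from the ingredients laid out in Section~\ref{sec:reals}, together with Theorem~\ref{thm:ints}. First I would normalize: given an arbitrary real multiset $W=\{w_1,\ldots,w_n\}$, shift every weight by the constant $c=-\log\sum_i 2^{w_i}$, so the shifted multiset is $\{\log p_1,\ldots,\log p_n\}$ for a probability distribution $P$. Since $M(W+c)=M(W)+c$ and the tree shape is unaffected, this reduces the problem to the setting of Theorem~\ref{thm:D&S}. That theorem tells us it suffices to (i) find the threshold index $i$ (equivalently, the threshold value $x_i$) so that rounding $|w_j|$ down for $x_j\le x_i$ and up otherwise yields lengths obeying the Kraft Inequality, and (ii) build a minimax tree on the resulting integer weights and replace their leaf labels by the original $w_j$. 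Step~(ii) costs $\Oh{n}$ by Theorem~\ref{thm:ints}, so the whole job reduces to implementing step~(i) in $\Oh{n}$ time.

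For step~(i), I would avoid sorting $X$ (which would already cost $\Omega(n\log n)$ for real numbers) and instead use generalized selection in the style of Klawe and Mumey. Starting from $X_1=X\cup\{0\}$, in round $r$ I would pick the median $x_m$ of $X_r$ in $\Oh{|X_r|}$ time by the Blum \emph{et al.} algorithm, test the Kraft condition for the candidate $x_m$, and discard the half of $X_r$ that cannot contain the answer depending on the outcome. The largest median encountered with a positive test is $x_i$. Because $|X_r|$ halves each round, the selection work telescopes to $\Oh{n}$ and only $\Oh{\log n}$ Kraft tests are performed in total.

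The core obstacle, and the point where the $\Omega(\log n)$-bit word RAM is used, is performing these $\Oh{\log n}$ Kraft tests in $\Oh{n}$ amortized time rather than the naive $\Oh{n\log n}$. I would use the two-fraction data structure described in the paragraph preceding the theorem: maintain $S_1,S_2$ as $n$-bit binary fractions packed in $(\log n)$-bit blocks, where $S_1$ accumulates the contributions of indices already committed to a side of the threshold and $S_2$ accumulates the contribution of the current ``median batch'' $\min(X_r)\le x_j\le x_m$. Each individual $1/2^k$ update costs $\Oh{1}$ amortized (as with binary-counter increment), and since the blocks carry at most one bit into the next block, a full comparison $S_1+S_2\le 1$ and the reset of $S_2$ each cost $\Oh{n/\log n}$. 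Maintaining the invariant that $S_1$ always stores the contribution from all $x_j$ strictly left of the surviving multiset $X_r$ then lets the sum $S_1+S_2$ be exactly the left-hand side of the Kraft Inequality at the tested $x_m$.

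Putting the pieces together: round $r$ costs $\Oh{|X_r|+n/\log n}=\Oh{n/2^r+n/\log n}$, and summing over the $\Oh{\log n}$ rounds gives $\Oh{n}$. Adding the $\Oh{n}$ initial computation of $S_1=\sum_j 1/2^{\lceil|w_j|\rceil}$, the $\Oh{n}$ Blum-style selection work, and the $\Oh{n}$ integer-weight construction from Theorem~\ref{thm:ints}, the total running time on the word RAM is $\Oh{n}$. The main technical hurdle is precisely the Kraft-testing data structure: one must verify that the amortized cost of the individual $1/2^k$ increments does not interact badly with the per-round $\Oh{n/\log n}$ block sweeps, and that the invariant on $S_1$ is restored correctly whether the round's test succeeds or fails; the rest of the argument is bookkeeping around the reductions already established earlier in the paper.
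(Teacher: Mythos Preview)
Your proposal is correct and follows essentially the same approach as the paper: normalize to a probability distribution, use generalized selection with the Blum \emph{et al.} median algorithm to locate the threshold $x_i$, maintain the two $n$-bit fractions $S_1,S_2$ with the stated invariant so that each Kraft test costs $\Oh{n/\log n}$, sum the $\Oh{n/2^r + n/\log n}$ per-round costs over $\Oh{\log n}$ rounds, and finish with Theorem~\ref{thm:ints}. The only detail you leave implicit is the reduction (via the argument of Lemma~\ref{lem:replacement}) guaranteeing $\lceil |w_j|\rceil \le n-1$ so that the $n$-bit fractions can actually hold every term $1/2^{\lceil |w_j|\rceil}$; since you explicitly invoke the paper's data-structure paragraph this is a minor omission rather than a gap.
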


\bibliographystyle{plain}
\bibliography{iwoca}

\end{document}